\newcommand{\ket}[1]{\ensuremath{|#1\rangle}}
\newcommand{\bra}[1]{\ensuremath{\langle#1|}}
\newcommand{\ie}{\emph{i.e.}}
\newcommand{\etal}{\emph{et al}}
\newcommand{\p}{\scriptscriptstyle{+}}
\newcommand{\m}{\scriptscriptstyle{-}}
\newcommand{\dg}{\dagger}
\newcommand{\mf}{\mathbf}
\newcommand{\mb}{\mathbb}
\newcommand{\txt}[1]{\text{#1}}
\newtheorem{prop}{Proposition}%[section]
\begin{document}

\title{Generalized parity in multi-photon Rabi model}

\author{Bart{\l}omiej Gardas}
%\vspace{2mm}
\address{Institute of Physics, University of Silesia, PL-40-007 Katowice, Poland}
\email{bartek.gardas@gmail.com}

\author{Jerzy Dajka}
%\vspace{2mm}
\email{jerzy.dajka@us.edu.pl}

\begin{abstract}
Quantum multi--photon spin--boson model is considered. 
We solve an operator Riccati equation associated with 
that model and present a candidate for a generalized 
parity operator allowing to transform spin--boson 
Hamiltonian to a block diagonal form what indicates 
an existence of the related symmetry of the model. 
\end{abstract}

%Uncomment for PACS numbers title message
% 03.65.Yz - Decoherence, open systems, quantum statistical methods; 03.67.-a - Quantum Information
\pacs{03.65.Yz, 03.67.-a}
% Keywords required only for MST, PB, PMB, PM, JOA, JOB?
%\keywords{Rabi model, spin--boson model, parity operator, Riccati equation}  
%\vspace{2pc}
%\noindent{\it Keywords}: Article preparation, IOP journalsb
% Uncomment for Submitted to journal title message
%\submitto{\JPA}
% Comment out if separate title page not required
\maketitle

\section{Introduction}
Phenomenological modeling of interacting matter and (quantized) light in physics has a long 
and interesting history~\cite{haken}. It is of quantum optical origin~\cite{vedral} but is 
present in wide range of other branches of physics such as condensed matter~\cite{hol,*n1,*n2,*n3}
or involving mechanical oscillators~\cite{irish,*schwab}. The Rabi model~\cite{rabi1,*rabi2}, 
describing a qubit coupled to a single mode electromagnetic field, is the one which has 
attracted continuous attention for almost a century. More~\cite{braak_letters,*braak_pra,*rabi_exact,*rabi_Nstates} and less~\cite{SB_Fannes,*SB_Spohn} recent studies on its integrability
have inspired increasingly growing research.     

An existence of a symmetry of any quantum model is directly related to a quality of our understanding of
its properties~\cite{wigner}. A `sufficient' (in certain sense) symmetry can result in an integrability
of the model~\cite{perelomov}. That is why seeking for any underlying symmetry of quantum models is 
always of great interest and often of great importance. In this paper we present our contribution
to this activity. We consider a family of generalized single--mode Rabi models~\cite{srwa}: 
\begin{equation}
\label{sb}
\txt{H}=\alpha\sigma_z+\omega a^{\dg}a+\sigma_x\left(g^*a^k+g(a^{\dg})^k\right),
\end{equation}
where $\sigma_z$ and $\sigma_x$ are the Pauli matrices, $\alpha$ and $\omega$ correspond to the energy
gap of the spin and boson, respectively, whereas $a$ and $a^{\dg}$ are the annihilation and creation
operators of quantized mode of light satisfying canonical commutation relation, $[a,a^{\dagger}]=\mathbb{I}$. 
It is assumed that the coupling between the qubit and the field, controlled by the strength constant $g$, 
incorporates $k>0$ photons. 

In this paper, by solving an operator Riccati equation associated with Eq.~(\ref{sb}), we construct an 
operator exhibiting significant similarities to the parity operator acting on the bosonic space. This
operator, the generalized parity, can be used to simplify multi--photon Rabi model~(\ref{sb}) and transform
it to a block--diagonal form. Our work is a complementary expansion of certain results obtained in Ref.~\cite{srwa} for $k=1$ and $k=2$ in the context of approximate methods of solving the Rabi model.  

The paper is organized as follows: In Sec.~\ref{two} we present operator Riccati equation associated 
with~(\ref{sb}) serving as a main tool applied in our studies. Next, in Sec.~\ref{three} the known
results concerning the $k=1$, $2$ cases are reviewed. Sec.~\ref{four} has been devoted to the construction
of the generalized parity and contains main results of our work. Finally, in Sec.~\ref{five}, followed 
by  conclusions, we apply the general construction to a simple example.  

\section{A tool: Riccati equation}
\label{two}

Multi--photon Rabi model considered here belongs to a general class of qubit--environment  composite systems described by Hamiltonian 
\begin{equation}\label{full}
\txt{H}_{\txt{QE}}=\txt{H}_{\txt{Q}}\otimes\mathbb{I}_{\txt{E}}+
                  \mathbb{I}_{\txt{Q}}\otimes\txt{H}_{\txt{E}}+\txt{H}_{\txt{int}}
\sim
\begin{bmatrix}
\txt{H}_{\p} & \txt{V} \\
\txt{V}^{\dg} & \txt{H}_{\m} 
\end{bmatrix}\equiv\mf{H}_{\text{QE}},
\end{equation}
where $\txt{H}_{\txt{Q}}$ ($\txt{H}_{\txt{E}}$) is the Hamiltonian of the system (environment).
$\txt{H}_{\txt{int}}$ is the interaction of the qubit with its  surroundings. $\mathbb{I}_{\txt{Q}}$ and
$\mathbb{I}_{\txt{E}}$ are identities acting on corresponding Hilbert spaces $\mathbb{C}^2$ and 
$\mathcal{H}_{\txt{E}}$. The total Hamiltonian $\txt{H}_{\txt{QE}}$ acts on $\mathbb{C}^2\otimes\mathcal{H}_{\txt{E}}$ and the symbol $\sim$ should be understand as
`\emph{it corresponds to}' in the sense of block operator matrix representation of operators. 
This correspondence  is established via 
the isomorphism $\mathbb{C}^2\otimes\mathcal{H}_{\txt{E}}\sim\mathcal{H}_{\txt{E}}\oplus\mathcal{H}_{\txt{E}}$.
Finally, the form of remaining operators $\txt{H}_\pm$ and  $\txt{V}$
%$\mathcal{H}_{\txt{E}}\rightarrow\mathcal{H}_{\txt{E}}$
depends upon how $\txt{H}_{\txt{Q}}$, $\txt{H}_{\txt{E}}$ and $\txt{H}_{\txt{int}}$ are defined.

Any steps toward diagonalization of $\mf{H}_\text{QE}$ is valuable as it can be followed by 
variety of different approximation schemes~\cite{srwa}. There is often an additional benefit
emerging form  such transformations which can help to exhibit useful symmetry properties being
often obscured by an `improper choice' of a basis. As it is pointed out below it is also the case
of the multi--photon Rabi model~(\ref{sb}) discussed in this paper. 

Our idea originates from an observation that Hamiltonian $\mf{H}_\text{QE}$ can be converted to a
block--diagonal form %, $\mf{S}^{-1}\mf{H}_{\text{QE}}\mf{S}=\mf{H}_{\txt{d}}$, where   

\begin{equation}
\label{diag}
\mf{S}^{-1}\mf{H}_{\text{QE}}\mf{S} =
       \begin{bmatrix}
    \txt{H}_{\p}+\txt{VX} & 0 \\
    0 & \txt{H}_{\m}- (\txt{VX})^{\dg}
    \end{bmatrix},
    \quad\text{with}\quad
     \mf{S} =
    \begin{bmatrix}
    \mathbb{I}_{\txt{E}} & -\txt{X}^{\dg} \\
    \txt{X} & \mathbb{I}_{\txt{E}}
    \end{bmatrix}, 
\end{equation}
provided that $\txt{X}$ satisfies an operator Riccati equation
\begin{equation}
\label{mgricc}
\txt{XVX}+\txt{X}\txt{H}_{\p}-\txt{H}_{\m}\txt{X}-\txt{V}^{\dg}=0.
\end{equation}
For general considerations regarding an operator Riccati equation we refer the reader 
to~\cite{Vadim,*RiccEq,egorov}. This equation provides valuable tool allowing to study
the exact diagonalization~\cite{Rdiag,*gardas4}, stationary states~\cite{gardasPuchala}
and in general, the dynamics~\cite{gardas_pra} of two level open quantum systems~\cite{breuer_book,*alicki}.
From  the decomposition~(\ref{diag}) it is evident that the dynamics of a qubit--environment
quantum system is actually governed by the Riccati~(\ref{mgricc}) and pair of 
\emph{uncoupled} Schr\"{o}dinger equations.

For the $k$-photon Rabi model studied in our paper
\begin{equation}
\label{krabih}
\txt{H}_{\pm}=\omega a^{\dg}a\pm \left(g^*a^{k}+g(a^{\dg})^k\right),%\nonumber
\quad
\txt{V}=\alpha\mb{I}_{\mathcal{H}_{\txt{B}}} 
\end{equation}
and the corresponding Riccati equation reads as follows:
\begin{equation}
\label{gricc}
\alpha\txt{X}^2+\txt{X}\txt{H}_{\p}-\txt{H}_{\m}\txt{X}-\alpha=0.
\end{equation}
Its mathematical properties has already been addressed in literature~\cite{gardas,*gardas2}. 
In Eq.~(\ref{gricc}), $\txt{H}_{\pm}$ are operators acting on the bosonic Fock space $\mathcal{H}_{\txt{B}}$,  $\alpha$ is a real constant, whereas $\txt{X}$ is a solution to be found. If it does not lead to a confusion, we  write  $\alpha$ rather than $\alpha \mb{I}_{\mathcal{H}_{\txt{B}}}$, with $\mb{I}_{\mathcal{H}_{\txt{B}}}$ being the identity on $\mathcal{H}_{\txt{B}}$. 
\section{Known solutions: $k=1,2$.}
\label{three}
For the sake of self-consistency, we begin with reviewing known solutions and their properties 
for the two particular cases, where $k=1$, $2$. For the simplest possible case, $k=1$ the solution
of the Riccati equation~(\ref{gricc}) was found in~\cite{gardas4} to be the bosonic parity operator 
\begin{equation}
\label{parity}
 \txt{P} = \sum_{n\in\mb{N}}e^{i\pi n}\ket{n}\bra{n}
               = \sum_{n\in\mb{N}}(-1)^n\ket{n}\bra{n},
\end{equation}
which can also be written in a more compact form as $\txt{P}=\exp(i\pi a^{\dg}a)$, where $\{\ket{n}\}_{n\in\mathbb{N}}$ is the Fock basis, \ie, $a^{\dg}a\ket{n}=n\ket{n}$. Such 
operator is both hermitian and unitary, hence it is an involution 
($\txt{P}^2=\mathbb{I}_{\mathcal{H}_{\txt{B}}}$). Interestingly, it solves Eq.~(\ref{gricc})
for both $\alpha=0$ (dephasing~\cite{SB_Alicki,*dajka_cat}) and  $\alpha\not=0$ (exchange 
energy between the systems is present) cases, although they reflect quite different
physical processes. 

In the context of RWA--type approximation methods the two-photon ($k=2$) Rabi model was studied 
in details within~\cite{srwa}. The two--photon parity operator 
\begin{equation}
 \label{parity2}
 \txt{T}:=\exp\left[i\frac{\pi}{2}a^{\dg}a\left(a^{\dg}a-1\right)\right],
\end{equation} 
was introduced therein. It has not been stated explicitly in~\cite{srwa} but the parity 
$\text{T}$ is, as will be shown below, a solution of the Riccati equation~(\ref{gricc}) for $k=2$.  

\section{General case: $k>0$}
\label{four}

In what follows we show how to construct a  solution of the Riccati equation~(\ref{gricc}) with
coefficients $\txt{H}_{\pm}$ provided by~(\ref{krabih}) in the general case $k>0$. Before we start
let us emphasize that the parity operator $\txt{P}$ ($\txt{T}$) introduced in the preceding section
solves Eq.~(\ref{gricc}) not only for $k=1$ ($k=2$) but also for all odd $k=2n+1$ (even, of the form
$k=2n+4$) cases. This has already been noticed in~\cite{srwa}. Here we will not only fill the 
remaining gap $k=2n+2$ but also present unified approach allowing to obtain a linear solution for 
arbitrary $k$. As a first step toward constructing this solution, we define a family of orthogonal
projectors 
\begin{equation}
\label{family}
\txt{P}_l:=\sum_{n=0}^{\infty}\ket{n,l}\bra{n,l},
\quad\txt{with}\quad
\ket{n,l}:=\ket{kn+l-1},
\end{equation}
for $n\in\mathbb{N}$ and $1\le l\le k$. The states $\ket{n,l}$ satisfy the following orthogonality condition:
\begin{equation}
\label{delta}
\langle{i,l}|j,l\rangle=\delta_{kn+i-1,km+j-1}=\delta_{ij}\delta_{nm},
\end{equation}
where $\delta_{xy}$ is the Kronecker delta. The first equality in Eq.~(\ref{delta})
comes from the orthogonality of the Fock basis. The second one can be justified as 
follows. 

When $i=j$ both sides of~(\ref{delta}) reduce  to $\delta_{nm}$ since 
$\delta_{kn+i-1,km+i-1}=\delta_{nm}$. If $i\not=j$ (say $i>j$) the right hand side is
zero. The left hand side also vanishes, as one gets either $m=n$ or $m\not=n$ in this 
case. Indeed, if $m=n$  then, to get nonzero left hand side, one would expect $i-j=0$ 
what is impossible. Finally, for $m\not=n$ ($m>n$, say) one would expect that $k(m-n)=j-i$
(in order to keep the left side nonzero) what also does not occur as $k>i-j$ and it cannot
divide $i-j$.  

For a given family of orthogonal projectors one can split the space $\mathcal{H}_{\txt{B}}$
into $k$ subspaces so that
\begin{equation}
\label{split}
 \mathcal{H}_{\txt{B}}=
 \mathcal{H}_{1}\oplus\mathcal{H}_{2}\oplus\cdots\oplus\mathcal{H}_{k-1}\oplus\mathcal{H}_{k}=
 \bigoplus_{l=1}^{k}\mathcal{H}_{l},
\end{equation}
where  $\mathcal{H}_{l}:=\txt{P}_l\left(\mathcal{H}_{\txt{B}}\right)$. 
The symbol $\oplus$ indicates the (orthogonal) direct sum of Hilbert spaces. Hereafter, we use 
it interchangeably with $+$ when it refers to the sum of operators.

The decomposition~(\ref{split}) allows us to think of operators $\txt{H}_{\pm}$ as of $k\times k$
block operator matrices $[\txt{H}_{\pm}^{lm}]_{k\times k}$ such that
\begin{equation}
\txt{H}_{\pm}^{lm}:=\txt{P}_l(\txt{H}_{\pm})\txt{P}_m:\mathcal{H}_{l}\rightarrow\mathcal{H}_{m}. 
\end{equation} 
Obviously, $\txt{H}_{\pm}^{ll}$ act within the space $\mathcal{H}_{l}$, and
they may be considered as  operators $\txt{H}_{\pm}$ restricted to the space $\mathcal{H}_{l}$, \ie,
$\txt{H}_{\pm}^{ll}:=(\txt{H}_{\pm})_{|\mathcal{H}_l}$. The off-diagonal elements
$\txt{H}_{\pm}^{lm}$ act between the subspaces $\mathcal{H}_{l}$, $\mathcal{H}_{m}$ and therefore 
transform state from one space into the other. In this simple picture instead of speaking of compositions, 
sums and any other operations involving two or more operators acting on $\mathcal{H}_{\txt{B}}$, we 
operate with corresponding  $k\times k$ matrices.
The block operator matrix representation of $\txt{H}_{\pm}$ is useful provided that it has a relatively
simple form. As an example, an occurrence of a block diagonal structure ($\txt{H}_{\pm}^{lm}=\delta_{lm}\txt{H}_{\pm}^{(l)}$)
would be an ideal situation. As it will be shortly seen this is indeed the case here. %for~(\ref{split})

Since~(\ref{delta}) holds, it is a matter of straightforward calculations to show that $\bra{i,l}a^{\dg}a\ket{j,m}=\xi_{jm}\delta_{ij}\delta_{lm}$ and
$\bra{i,l}a^k\ket{j,m}=\eta_{jm}\delta_{i,j-1}\delta_{lm}$, for a given $k$, where
\begin{equation}
\xi_{jm}=kj+m-1 %\bra{i,l}a^{\dg}a\ket{j,m}=kj+m-1,
\quad\txt{and}\quad
\eta_{jm}=\sqrt{\frac{(kj+m-1)!}{[k(j-1)+m-1]!}}.
               %\bra{i,l}a^k\ket{j,m}=\sqrt{\frac{(kj+m-1)!}{[k(j-1)+m-1]!}}
\end{equation}
By making use of the above equations we obtain
\begin{equation}
\begin{split}
\text{P}_l(a^{\dg}a)\txt{P}_m
           &=
          \sum_{i,j=0}^{\infty}\left(\bra{i,l}a^{\dg}a\ket{j,m}\right)
          \ket{i,l}\bra{j,m}  
%          &= \sum_{i,j=0}^{\infty}\xi_{jm}\delta_{ij}\delta_{lm}
%           \ket{i,l}\bra{j,m} \\
          = \left(\sum_{j=0}^{\infty}\xi_{jm}\ket{j,l}\bra{j,m}\right)\delta_{lm}\\
          &\equiv \left(k\txt{n}_l+(l-1)\mb{I}_{\mathcal{H}_l}\right)\delta_{lm} 
          \equiv
 \text{N}_l\delta_{lm}, 
\end{split}
\end{equation}
where we have introduced $\txt{N}_l$ - the number operator restricted to the subspace
$\mathcal{H}_l$, $\txt{N}_{l}:=(a^\dg a)_{|\mathcal{H}_l}$:
\begin{equation}
 \txt{N}_{l}=k\txt{n}_l+(l-1)\mb{I}_{\mathcal{H}_l},
 \quad%\text{where}\quad
 \txt{n}_l=\sum_{n=0}^{\infty}n\ket{n,l}\bra{n,l},
\end{equation}
with $\mb{I}_{\mathcal{H}_l}$ being the identity on $\mathcal{H}_l$. In a similar fashion we have
\begin{equation}
 \text{P}_l(a^k)\txt{P}_m 
           =%
          \sum_{i,j=0}^{\infty}\left(\bra{i,l}a^k\ket{j,m}\right)
          \ket{i,l}\bra{j,m}\\ 
%          &= \sum_{i,j=0}^{\infty}\eta_{jm}\delta_{i,j-1}\delta_{lm}
%           \ket{i,l}\bra{j,m} \\
          = \left(\sum_{j=0}^{\infty}\eta_{jm}\ket{j-1,l}\bra{j,m}\right)\delta_{lm},
%          \equiv 
%\txt{A}_l\delta_{lm},
\end{equation}
from which  $\txt{A}_l:=\sum_{j=0}^{\infty}\eta_{jl}\ket{j-1,l}\bra{j,l}$  is
nothing but $(a^k)_{|\mathcal{H}_l}$ and plays on $\mathcal{H}_l$ a role of annihilation operator.
%(hereafter $\txt{A}_l^{\dg}:=\left(\txt{A}_l\right)^{\dg}$).

Combining all the above results  into a single equation we finally obtain the block matrix
representation of $\txt{H}_{\pm}$ with respect to the decomposition~(\ref{split}). It 
has a diagonal structure indeed: 
\begin{equation}
\label{bds}
 \txt{H}_{\pm}^{lm}=\left[\omega\txt{N}_l\pm(g^*\txt{A}_l+g\txt{A}_l^{\dg})\right]\delta_{lm}
 \equiv \txt{H}_{\pm}^{(l)}\delta_{lm},
% \quad\txt{for}\quad
% l,m\leq k.
\end{equation}
for $ l,m\leq k$.
 
Having~(\ref{bds}) in place, we can formulate the result of this paper:

\begin{prop} 
The solution to the Riccati equation~(\ref{gricc}) is given by the {\it generalized parity operator}
\begin{equation}
\label{sol}
 \txt{X}_k=\sum_{l=1}^{k}\sum_{n=0}^{\infty}(-1)^n\ket{n,l}\bra{n,l}, \quad\txt{for}\quad k>0.
% \txt{X}_k=\bigoplus_{l=1}^{k}\txt{J}_l\sim\txt{diag}[\txt{J}_1,\dots,\txt{J}_k]
\end{equation}
%and as one can see it also has a block-diagonal structure. 
%
\end{prop}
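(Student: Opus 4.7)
The plan is to exploit the block-diagonal structure of $\txt{H}_\pm$ established in Eq.~(\ref{bds}) and observe that the candidate $\txt{X}_k$ is block-diagonal with respect to the very same decomposition~(\ref{split}). Writing $\txt{X}_k^{(l)}:=\sum_{n=0}^{\infty}(-1)^n\ket{n,l}\bra{n,l}$, one has $\txt{P}_l\,\txt{X}_k\,\txt{P}_m=\txt{X}_k^{(l)}\delta_{lm}$. Combined with~(\ref{bds}), this means that the Riccati equation~(\ref{gricc}) decouples into $k$ independent equations, one on each subspace $\mathcal{H}_l$:
\begin{equation}
\alpha\bigl(\txt{X}_k^{(l)}\bigr)^2+\txt{X}_k^{(l)}\txt{H}_{\p}^{(l)}-\txt{H}_{\m}^{(l)}\txt{X}_k^{(l)}-\alpha\,\mb{I}_{\mathcal{H}_l}=0,\qquad 1\le l\le k.
\end{equation}
So it is enough to verify this reduced equation on each block.

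First I would observe that $\txt{X}_k^{(l)}$ is diagonal in the basis $\{\ket{n,l}\}$ with eigenvalues $\pm1$, hence it is hermitian and an involution: $\bigl(\txt{X}_k^{(l)}\bigr)^2=\mb{I}_{\mathcal{H}_l}$. This immediately cancels the two $\alpha$-terms. What remains is the commutation identity $\txt{X}_k^{(l)}\txt{H}_{\p}^{(l)}=\txt{H}_{\m}^{(l)}\txt{X}_k^{(l)}$, i.e.\ that $\txt{X}_k^{(l)}$ commutes with $\omega\txt{N}_l$ and anticommutes with $g^*\txt{A}_l+g\txt{A}_l^{\dg}$.

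The commutation with $\txt{N}_l$ is immediate since both operators are diagonal in $\{\ket{n,l}\}$. For the anticommutation with $\txt{A}_l$, I would use the explicit form $\txt{A}_l=\sum_{j}\eta_{jl}\ket{j-1,l}\bra{j,l}$ established above: a direct computation gives $\txt{X}_k^{(l)}\txt{A}_l=\sum_j(-1)^{j-1}\eta_{jl}\ket{j-1,l}\bra{j,l}$ and $\txt{A}_l\txt{X}_k^{(l)}=\sum_j(-1)^j\eta_{jl}\ket{j-1,l}\bra{j,l}$, so the two differ by a global sign; by hermitian conjugation the same holds for $\txt{A}_l^{\dg}$. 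Therefore $\txt{X}_k^{(l)}\bigl(g^*\txt{A}_l+g\txt{A}_l^{\dg}\bigr)=-\bigl(g^*\txt{A}_l+g\txt{A}_l^{\dg}\bigr)\txt{X}_k^{(l)}$, which combined with the commutation with $\omega\txt{N}_l$ gives exactly $\txt{X}_k^{(l)}\txt{H}_{\p}^{(l)}=\txt{H}_{\m}^{(l)}\txt{X}_k^{(l)}$.

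I do not expect any serious obstacle: the whole argument is driven by the block decomposition~(\ref{bds}), after which the problem collapses to checking that a $\pm1$-valued diagonal operator flips sign under the one-step ladder $\txt{A}_l$. The only mildly delicate point is bookkeeping the index shift $j\mapsto j-1$ in the computation of $\txt{A}_l\txt{X}_k^{(l)}$ versus $\txt{X}_k^{(l)}\txt{A}_l$, and making sure the summation range is handled correctly at $j=0$ (where $\ket{j-1,l}$ is absent and $\eta_{0l}=0$ anyway, so no boundary issue arises). Assembling the $k$ block identities via $\txt{X}_k=\bigoplus_l\txt{X}_k^{(l)}$ then yields the full Riccati equation~(\ref{gricc}), completing the proof.
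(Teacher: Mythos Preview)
Your proposal is correct and follows essentially the same route as the paper: both define the block-diagonal pieces $\txt{X}_k^{(l)}$ (the paper calls them $\txt{J}_l$), verify that each is an involution commuting with $\txt{N}_l$ and anticommuting with $\txt{A}_l$, and then assemble the blocks via the decomposition~(\ref{split}) to obtain $\txt{X}_k^2=\mathbb{I}$ and $\txt{X}_k\txt{H}_+=\txt{H}_-\txt{X}_k$, from which~(\ref{gricc}) follows. The only cosmetic difference is that the paper phrases the anticommutation as $\txt{J}_l\txt{A}_l\txt{J}_l=-\txt{A}_l$ rather than $\txt{X}_k^{(l)}\txt{A}_l=-\txt{A}_l\txt{X}_k^{(l)}$, which are equivalent given the involution property.
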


\begin{proof}

We begin by defining {\it partial parities} $\txt{J}_l$ to be
\begin{equation}
\label{ps}
 \txt{J}_l:=e^{i\pi n_l}=\sum_{n=0}^{\infty}(-1)^n\ket{n,l}\bra{n,l},
 \quad
 l\leq k.
\end{equation}
It is justified to refer to $\txt{J}_l$ as a parity since it possesses 
all the desired properties  required from the parity operator on
$\mathcal{H}_l$, namely
\begin{equation}
\label{par}
\txt{J}_l^2=\mb{I}_{\mathcal{H}_l},
\quad
[\txt{N}_l,\txt{J}_l]=0,
\quad\txt{and}\quad
\txt{J}_l\txt{A}_l\txt{J}_l=-\txt{A}_l.
\end{equation}
As an immediate consequence of this conditions \mbox{$\txt{J}_l\txt{H}_{\p}^{(l)}\txt{J}_l=\txt{H}^{(l)}_{\m}$},
what in the block operator matrix terminology developed in the preceding section leads to
\begin{equation}
\label{hpm}
\txt{X}_k\txt{H}_{\p}\txt{X}_k
\sim
\txt{diag}[\txt{J}_1\txt{H}_{\p}^{(1)}\txt{J}_1,\dots,\txt{J}_k\txt{H}_{\p}^{(k)}\txt{J}_k]
\sim \txt{H}_{\m},
\end{equation}
or simply $\txt{X}_k\txt{H}_{\p}=\txt{H}_{\m}\txt{X}_k$. Therefore, to prove~(\ref{gricc}) holds true, it 
is sufficient to show that $\txt{X}_k$ is an involution. It can be established in a following way:
\begin{equation}
\label{inv}
 \txt{X}_k^2%=\bigoplus_{l=1}^{k}\txt{P}_l^2
    \sim\txt{diag}[\txt{J}_1^2,\dots,\txt{J}_k^2]
    \sim\bigoplus_{l=1}^{k}\mathbb{I}_{\mathcal{H}_{l}}
    =\mathbb{I}_{\mathcal{H}_{\txt{B}}}.
\end{equation}
Both in Eqs.~(\ref{hpm}) and~(\ref{inv}) we have taken into account
the correspondence $\txt{X}_k\sim\txt{diag}[\txt{J}_1,\dots,\txt{J}_k].\mbox{\qedhere}$
\end{proof}

\section{Examples}
\label{five}

It is interesting to see how the solutions which  have been found in the previous section  
can be used to recover the known results for $k=1$, $2$.
When $k=1$ there is only one subspace of $\mathcal{H}_{\txt{B}}$, namely $\mathcal{H}_{\txt{B}}$
itself, and  $\txt{X}_1$ is equal to the bosonic parity operator~(\ref{parity}). 
For $k=2$ there are only two projection within the family of operators~(\ref{family}), \ie,
\begin{equation}
\txt{P}_{1}=\sum_{n=0}^{\infty}\ket{2n}\bra{2n},
\quad
\txt{P}_{2}=\sum_{n=0}^{\infty}\ket{2n+1}\bra{2n+1},
\end{equation}
or in a compact form $\txt{P}_{1,2}=\tfrac{1}{2}(\mathbb{I}_{\mathcal{H}_{\txt{B}}}\pm\txt{P})$, which
according to~(\ref{split}) split the bosonic Hilbert space into two subspaces. The first one consists 
only of odd, while the second one of even Fock states: 
\begin{equation}
\label{eo}
\mathcal{H}_{\txt{B}}=
\txt{span}\{\ket{2n}:n\in\mathbb{N}\}\oplus
\txt{span}\{\ket{2n+1}:n\in\mathbb{N}\}.
\end{equation}
The block operator matrix representation of $\txt{H}_{\pm}$ with respect to~(\ref{eo}) reads
\begin{equation}
\label{tbt}
\txt{H}_{\pm}
\sim
\begin{bmatrix}
\txt{H}_{\pm}^{(1)} & 0 \\
0 & \txt{H}_{\pm}^{(2)}
\end{bmatrix}
\\
=
\begin{bmatrix}
\txt{J}_1 & 0 \\
0 & \txt{J}_2
\end{bmatrix}
\begin{bmatrix}
\txt{H}_{\mp}^{(1)} & 0 \\
0 & \txt{H}_{\mp}^{(2)}
\end{bmatrix}
\begin{bmatrix}
\txt{J}_1 & 0 \\
0 & \txt{J}_2
\end{bmatrix}
\sim
\txt{H}_{\mp},
\end{equation}
where diagonal entries are explicitly given by~(\ref{bds}). To see that the two-photon 
parity $\txt{T}$ is indeed the same as $\txt{X}_2$ one only needs to invoke a simple fact,  
\mbox{$(-1)^{n(2n-1)}=(-1)^{n(2n+1)}=(-1)^{n}$}, then

\begin{equation}
 \label{check}
\txt{T}  =  \sum_{n=0}^{\infty}(-1)^{\frac{n(n-1)}{2}}\ket{n}\bra{n} 
         =  \sum_{n=0}^{\infty}(-1)^n\left(\ket{2n}\bra{2n} +\ket{2n+1}\bra{2n+1}\right).
%          = \txt{P}_1 + \txt{P}_2 = X_2.
\end{equation}

Eq.~(\ref{tbt}) also allows us to identify $\txt{X}_2\sim\txt{diag}[\txt{J}_1,\txt{J}_2]$.  

\section{Conclusion}
\label{six}
In summary, we have found the solution of the operator Riccati equation~(\ref{gricc}) associated 
with the multi--photon Rabi model~(\ref{sb}) indicating certain symmetry of the original model. 
This solution is a natural candidate for a parity operator as it not only reduces to well known
\mbox{one--\cite{gardas4}} and two--photon~\cite{srwa} parities but also its properties are that
of a typical parity operator. 

We have also proved by an explicit construction that the solution of~(\ref{gricc}) with coefficients
given by~(\ref{krabih}) exist for every $k>0$. We have excluded from our analysis the case where $k=0$
only because it does not reflect any relevant physical phenomena. Since there is no interaction between
the systems they evolve in time separately. 

The symmetry associated with the generalized parity Eq.~(\ref{sol}) allows to convert Rabi model into a 
block--diagonal form. Such a transformation decouples the original qubit--boson eigenproblem into a pair
of bosonic Schr\"{o}dinger equations. Despite certain mathematical  subtleties of the multi--photon Rabi 
models~\cite{ill} we hope that the results presented in this paper can serve as a starting point 
for useful approximation methods. 

Finally, let us notice that any involution $\txt{J}$ for which $\txt{J}\txt{H}_{\p}=\txt{H}_{\m}\txt{J}$ is a solution of~(\ref{gricc}) for $k>0$. Unfortunately, the question whether all the solution to~(\ref{gricc}) 
are of that special kind still remains open.

%      NCN - UMO-2011/01/N/ST3/02473
% Juventus - 0135/IP3/2011/71

\section{Acknowledgments}
This work was supported by the Polish Ministry of Science and Higher Education under project {\bf{Iuventus Plus}}, 
No. 0135/IP3/2011/71 (B. G) and NCN Grant N202 052940
(J. D)

%\bibliographystyle{apsrev4-1}
%\bibliography{riccati}
%Merlin.mbs v4.21 2009-07-09.
%
\end{document}